\newcommand{\qd}{\unitlength1pt\begin{picture}(7,7)   \put(0,0){\framebox(6,6){}}\end{picture}}
\newtheorem{theorem}{Theorem}[section]
\newtheorem{lemma}[theorem]{Lemma}
\newtheorem{hypothesis}{Hypothesis}[section]
\DeclareMathOperator{\inc}{inc}
\renewcommand{\P}{\mathbf{P}}
\title{The Feedback Arc Set Problem with Triangle Inequality is a Vertex Cover Problem}
\author{Monaldo Mastrolilli\\
IDSIA, 6928 Manno, Switzerland\\
monaldo@idsia.ch}
\begin{document}
\def\ratio{r(2q-1)}
\def\rratio{r(3q-1)}
\def\GP{G_\mathbf{P}}
\def\prob{1|prec\,|\sum w_jC_j}
\def\HP{\mathbf{H}_\mathbf{P}}
\newcommand{\ssc}[0]{\ensuremath{1|\mathit{prec}|\sum w_jC_j}}
\renewcommand{\P}{\mathbf{P}}
\newcommand{\VC}[1]
{
 \ifthenelse{\equal{#1}{}}{[VC-$\HP$]}{[VC-$\HP(#1)$]}
}
\newcommand{\MinFAS}{\textsc{MinFAS}}
\date{}

\maketitle
\begin{abstract}
 We consider the (precedence constrained) Minimum Feedback Arc Set problem with triangle inequalities on the weights, which finds important applications in problems of ranking with inconsistent information. We present a surprising structural insight showing that the problem is a special case of the minimum vertex cover in hypergraphs with edges of size at most 3. This result leads to combinatorial approximation algorithms for the problem and opens the road to studying the problem as a vertex cover problem.
\end{abstract}
\section{Introduction}

The \textsc{Minimum Feedback Arc Set} problem (\textsc{MinFAS}) is a fundamental and classical combinatorial optimization problem that finds application
in many different settings that span from circuit design, constraint satisfaction problems, artificial intelligence, scheduling, etc. (see e.g. Chapter 4 in~\cite{PardalosDu99:Handbook} for a survey). For this reason it has been deeply studied since the late 60's (see, e.g., \cite{lempel66}).

Its input consists of a set of vertices
$V$ and nonnegative weights $\{w_{(i,j)}: (i,j) \in V\times V\}$ for every oriented pair of vertices. The goal is to find a permutation
$\pi$ that minimizes $\sum_{\pi(i)<\pi(j)} w_{(i,j)}$, i.e. the weight of pairs of vertices that comply with the permutation\footnote{Different, but equivalent formulations are often given for the problem. Usually the goal is defined as the minimization of the
weight of pairs of vertices \emph{out of order} with respect to the permutation, i.e. $\sum_{\pi(i)<\pi(j)} w_{(j,i)}$. Clearly by swapping appropriately the weights we obtain the equivalence of the two definitions.}.
A \emph{partially ordered set} (\emph{poset}) $\mathbf{P} = (V,P)$,
consists of a set $V$ and a partial order $P$ on $V$, i.e., a
reflexive, antisymmetric, and transitive binary relation $P$ on $V$, which indicates that, for certain pairs of elements in the set, one of the elements precedes the other.
In the \emph{constrained} \textsc{MinFAS} (see~\cite{ZuylenHJW07}) we are given a partially ordered set $\mathbf{P} = (V,P)$  and we want to find a linear extension of $\mathbf{P}$ of minimum weight.


\textsc{MinFAS} was contained in the famous list of 21
NP-complete problems by Karp~\cite{KARP72}. Despite intensive research for
almost four decades, the approximability of this problem remains very
poorly understood due to the big gap between
positive and negative results.
It is known to be at least as hard as vertex cover~\cite{kann92}, but no constant approximation ratio
has been found yet. The best known approximation algorithm achieves
a performance ratio $O(\log n \log \log n)$ \cite{Seymour95,EvenNSS98,EvenNRS00}, where $n$ is the number of vertices of the digraph.
%
Closing this approximability gap is a well-known major open problem in the field of approximation algorithms (see e.g.~\cite{VaziraniBook},
p.~337).
Very recently and conditioned on the Unique Games Conjecture, it was shown~\cite{GuruswamiMR08} that for every constant $C > 0$, it is NP-hard to find a $C$-approximation to the \textsc{MinFAS}.

Important ordering problems can be seen as special cases of
\textsc{MinFAS} with restrictions on the weighting function.
Examples of this kind are provided by ranking problems related to the aggregation of inconsistent information, that have recently received a lot of attention~\cite{Ailon10,abs-1012-3011,AilonCN08,Kenyon-MathieuS07,ZuylenHJW07,ZuylenW09}. Several of these problems can be modeled as (constrained) \textsc{MinFAS} with weights satisfying either \emph{triangle inequalities} (i.e., for any triple $i,j,k$, $w_{(i,j)}+w_{(j,k)}\geq w_{(i,k)}$), or \emph{probability constraints} (i.e., for any pair $i,j$, $w_{(i,j)}+w_{(j,i)}=1$), or both.
Ailon, Charikar and Newman~\cite{AilonCN08} give the first constant-factor randomized approximation algorithm for the unconstrained \textsc{MinFAS} problem with weights that satisfy the triangle inequalities.
 %
For the same problem Ailon~\cite{Ailon10} gives a $3/2$-approximation algorithm and van Zuylen and Williamson~\cite{ZuylenW09} provide a $2$-approximation algorithm for the constrained version. These are currently the best known results for the (constrained) \textsc{MinFAS} with triangle inequalities and are both based on solving optimally and rounding the linear program relaxation of~\eqref{ILP:FAS}.
  When the probability constraints hold, Mathieu and Schudy~\cite{Kenyon-MathieuS07} obtain a PTAS.
%

Another prominent special case of
\textsc{MinFAS} with restrictions on the weighting function is given by a classical problem in scheduling, namely {the precedence constrained single machine} scheduling problem to minimize the weighted sum of completion times, denoted as $\prob$ (see e.g.~\cite{LaLeRS93} and~\cite{HSSW97} for a 2-approximation algorithm). This problem can be seen as a constrained \textsc{MinFAS} where the weight of arc $(i,j)$ is equal to the product of two numbers $p_i$ and $ w_j$: $p_i$ is the processing time of job $i$ and $w_j$ is a weight associated to job $j$ (see~\cite{AM09,07AmbMasMutSve,07AmbMasSve,KhotBansalSchedHardness09,CorreaSchulz04,km11} for recent advances). In~\cite{AM09,CorreaSchulz04}, it is shown that the structure of the
weights for this problem allows for all the constraints of size strictly larger than two to
be ignored, therefore the scheduling problem can be seen as a special case of the vertex
cover problem (in normal graphs). The established connection proved later to be very valuable both for
positive and negative results: studying this graph yielded a framework
that unified and improved upon previously best-known
approximation algorithms~\cite{07AmbMasMutSve,06AmbMasSve,km11}; moreover, it helped to obtain the first inapproximability results for this old
problem~\cite{07AmbMasSve,KhotBansalSchedHardness09} by revealing more of its structure and giving a first answer to a long-standing open question~\cite{SchWoe:99}.
%
%
\paragraph{New Results.}
The (constrained) \textsc{MinFAS} can be described by the following natural (compact) ILP using linear
ordering variables $\delta_{(i,j)}$ (see e.g.~\cite{ZuylenW09}): variable
$\delta_{(i,j)}$ has value 1 if vertex $i$ precedes vertex $j$ in
the corresponding permutation, and 0 otherwise.
\begin{subequations}
\label{ILP:FAS}
\begin{align}
  \text{[FAS]}\hspace{1cm}  \text{min}& \sum_{i\not =j}\delta_{(i,j)}w_{(i,j)}\\
  \text{s.t.}\hspace{1cm}& \delta_{(i,j)} + \delta_{(j,i)} = 1, &  \text{for all distinct }\ i,j,\label{2loop}\\
  & \delta_{(i,j)}+\delta_{(j,k)}+\delta_{(k,i)}\geq 1, & \text{for all distinct}\ i,j,k, \label{transitivity} \\
    & \delta_{(i,j)}=1,         & \text{for all } (i,j)\in P \label{prec}, \\
  & \delta_{(i,j)}\in \{0,1\}, & \text{for all distinct }\ i, j.
\end{align}
\end{subequations}
Constraint~\eqref{2loop} ensures that in any feasible permutation either vertex $i$
is before $j$ or vice versa. The set of
Constraints~\eqref{transitivity} is used to capture the
transitivity of the ordering relations (i.e., if $i$ is ordered
before $j$ and $j$ before $k$, then $i$ is ordered before $k$, since otherwise by using~\eqref{2loop} we would have $ \delta_{(j,i)}+\delta_{(i,k)}+\delta_{(k,j)}= 0$ violating~\eqref{transitivity}). Constraints~\eqref{prec} ensure that the returned permutation complies with the partial order $P$.
The constraints in~\eqref{ILP:FAS} were shown to be a minimal equation system 
for the linear ordering polytope in~\cite{grotschel1984acyclic}.

To some extent, one source of difficulty that makes the \textsc{MinFAS} hard to approximate within any constant is provided by the equality in Constraint~\eqref{2loop}. 
To see this, consider, for the time being, the unconstrained \textsc{MinFAS}. The following covering relaxation obtained by relaxing Constraint~\eqref{2loop}  behaves very differently with respect to approximation.
\begin{subequations}
\label{ILP:relaxFAS}
\begin{align}
  \text{min}& \sum_{i\not =j}\delta_{(i,j)}w_{(i,j)}\\
  \text{s.t.}\hspace{1cm}& \delta_{(i,j)} + \delta_{(j,i)} \geq 1, &  \text{for all distinct }\ i,j,\label{vc2loop}\\
  & \delta_{(i,j)}+\delta_{(j,k)}+\delta_{(k,i)}\geq 1, & \text{for all distinct}\ i,j,k, \label{vctransitivity} \\
  & \delta_{(i,j)}\in \{0,1\}, & \text{for all distinct }\ i, j.
\end{align}
\end{subequations}
Problem~\eqref{ILP:relaxFAS} is a special case of  the vertex cover problem in hypergraphs with edges of sizes at most 3. It admits ``easy'' constant approximate solutions (i.e. a trivial primal-dual 3-approximation algorithm, but also a 2-approximation algorithm for \emph{general weights} (no triangle inequalities restrictions) by observing that the associated vertex cover hypergraph is $2$ colorable and using the results in~\cite{AharoniHK96,Krivelevich97}); Vice versa, there are indications that problem~\eqref{ILP:FAS} may not have any constant approximation~\cite{GuruswamiMR08}. Moreover, the fractional relaxation of~\eqref{ILP:relaxFAS}, obtained by dropping the integrality requirement, is a positive linear program and therefore fast NC approximation algorithms exists: Luby and Nisan's algorithm~\cite{LubyN93} computes a feasible $(1+\varepsilon)$-approximate solution in time polynomial in $1/\varepsilon$ and $\log N$, using $O(N)$ processors, where $N$ is the size of the input (fast approximate solution can also be obtained through the methods of~\cite{mor95-plotkin-fast}). On the other side, the linear program relaxation of~\eqref{ILP:FAS} is not positive.
An interesting question is to understand under which assumptions on the weighting function
the covering relaxation~\eqref{ILP:relaxFAS} represents a ``good'' relaxation for \textsc{MinFAS}.

Surprisingly, we show that the covering relaxation~\eqref{ILP:relaxFAS} is an ``optimal'' relaxation, namely, a \emph{proper} formulation, for the unconstrained \textsc{MinFAS} when the weights satisfy the triangle inequalities. More precisely, we show that any $\alpha$-approximate solution to~\eqref{ILP:relaxFAS} can be turned in polynomial time into an $\alpha$-approximate solution to~\eqref{ILP:FAS}, for any $\alpha\geq 1$ and when the weights satisfy the triangle inequalities. The same claim applies to fractional solutions. (We also observe that the same result does not hold when the weights satisfy the probability constraints (see Appendix~\ref{sec:probability})).

Interestingly, a compact covering formulation can be also obtained for the more general setting with precedence constraints. In this case we need to consider the  following covering relaxation which generalizes~\eqref{ILP:relaxFAS} to partially ordered sets~$\mathbf{P} = (V,P)$.
\begin{subequations}
\label{ILP:relaxFAS(P)}
\begin{align}
 \text{min}& \sum_{i\not =j}\delta_{(i,j)}w_{(i,j)}\\
  \text{s.t.}\hspace{0.2cm} & \delta_{(x_1,y_1)}+\delta_{(x_2,y_2)}\geq 1, &
   (x_2,y_{1}),(x_1,y_{2})\in P,\label{vc2P}\\
  & \delta_{(x_1,y_1)}+\delta_{(x_2,y_2)}+\delta_{(x_3,y_3)}\geq 1, &
  (x_2,y_{1}),(x_3,y_{2}),(x_1,y_{3})\in P,\label{vc3P}\\
  & \delta_{(i,j)}\in \{0,1\}, & (i, j)\in \inc(\mathbf{P}).
\end{align}
\end{subequations}
where $\inc(\mathbf{P}) = \{(x,y)\in V\times V : (x,y),(y,x)\not \in P\}$
is the set of \emph{incomparable pairs} of $\mathbf{P}$.
When the poset is empty, then \eqref{ILP:relaxFAS(P)} boils down to~\eqref{ILP:relaxFAS} (since $P$ is a reflexive binary relation).
Note that \eqref{ILP:relaxFAS(P)} is a relaxation to constrained \textsc{MinFAS}, since Constraint~\eqref{vc2P} and \eqref{vc3P} are valid inequalities (otherwise we would have cycles).
%

Recall that a function $w:V\times V \rightarrow \mathbf{R}$ is \emph{hemimetric} if for all $i,j,k$ the following is satisfied:
\begin{enumerate}
\item $w(i,j)\geq 0$ (\emph{non-negativity}),
\item $w(i,i)= 0$, 
\item $w_{(i,k)}\leq w_{(i,j)}+w_{(j,k)}$ (\emph{triangle inequality}).
\end{enumerate}
The following theorem summarizes the main result of the paper that can be generalized to fractional solutions.
\begin{theorem}\label{th:3vc}
If the weighting function $w:V\times V \rightarrow \mathbf{R}$ is hemimetric then any (fractional) solution to~\eqref{ILP:relaxFAS(P)} can be transformed in polynomial time into a feasible (fractional) solution to~\eqref{ILP:FAS} without deteriorating the objective function value.
\end{theorem}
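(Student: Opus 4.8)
The plan is to isolate exactly where the hemimetric hypothesis is needed. After fixing $\delta_{(i,j)}=1$ and $\delta_{(j,i)}=0$ for every comparable pair $(i,j)\in P$ (the values forced by~\eqref{prec} and~\eqref{2loop}), the feasible region of~\eqref{ILP:FAS} differs from that of~\eqref{ILP:relaxFAS(P)} in only two respects: the equality in~\eqref{2loop}, and the full family of transitivity constraints~\eqref{transitivity}. First I would show that the latter are \emph{free}: every instance of~\eqref{transitivity} is already implied by~\eqref{vc2P}--\eqref{vc3P}. The argument is a short case analysis on how many of the three pairs $\{i,j\},\{j,k\},\{k,i\}$ are comparable, using that $(x,x)\in P$ (reflexivity), that two comparabilities compose (transitivity of $P$), and that $P$ has no directed cycle; taking $(x_1,y_1),(x_2,y_2),(x_3,y_3)$ to be the three pairs with the appropriate reflexive elements recovers exactly~\eqref{transitivity}. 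Consequently, a feasible solution of~\eqref{ILP:relaxFAS(P)} that \emph{in addition} satisfies~\eqref{2loop} with equality is automatically feasible for~\eqref{ILP:FAS}. This reduces the theorem to a single task: given a feasible $\delta$, remove the excess $e_{(i,j)}:=\delta_{(i,j)}+\delta_{(j,i)}-1\ge 0$ on all incomparable pairs without increasing $w\cdot\delta$. The hemimetric hypothesis is used \emph{only} here.

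For the removal I would proceed by a sequence of local shifts that strictly decrease the total excess $\sum e_{(i,j)}$ while preserving feasibility. A step selects an incomparable pair with $e_{(i,j)}>0$ and decreases the coordinate of larger weight; to keep satisfied every constraint of~\eqref{vc2P}--\eqref{vc3P} that this coordinate currently makes tight, it increases a companion coordinate appearing in that constraint, and to avoid creating new excess it decreases the opposite coordinate of the affected pair, and so on. Formally this is an augmentation along a path or cycle of an auxiliary \emph{residual} digraph whose arcs record, for each currently tight constraint, which coordinate may be increased to license the decrease of another; shifting an infinitesimal (or, in the integral case, unit) amount of $\delta$-mass around such a structure lowers $\sum e_{(i,j)}$ and keeps $\delta\in[0,1]$ and all covering constraints satisfied, as well as preserving integrality when the input is integral.

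The crux, and the step I expect to be hardest, is to guarantee that some available augmentation does \emph{not} increase $w\cdot\delta$. A single pairwise exchange across a tight constraint~\eqref{vc2P}, decreasing $\delta_{(x_1,y_1)}$ and increasing $\delta_{(x_2,y_2)}$, changes the cost by $w_{(x_2,y_2)}-w_{(x_1,y_1)}$, and this sign is \emph{not} controlled by the triangle inequality in isolation. The leverage comes from the defining relations $(x_2,y_1),(x_1,y_2)\in P$: the hemimetric gives $w_{(x_2,y_2)}\le w_{(x_2,y_1)}+w_{(y_1,x_1)}+w_{(x_1,y_2)}$, and the weights on the precedence arcs $x_2\to y_1$ and $x_1\to y_2$ are intended to cancel against those introduced at the neighbouring steps when the inequalities are summed around the whole augmenting structure. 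Thus the real content is a global charging argument showing that, among all augmentations that reduce the excess, at least one has nonpositive total cost change, by combining the per-constraint triangle inequalities so that every precedence weight cancels; this is precisely the point that fails for general (non-hemimetric) weights. Establishing this, together with an LP/flow formulation of the shifting process that bounds the number of steps polynomially, is where the main work lies.

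Finally, the argument is stated directly for fractional $\delta$, and the unit-shift version keeps integral inputs integral, so it simultaneously yields a linear extension of $\mathbf{P}$ in the integral case. As an alternative packaging, the same core fact (excess is removable at no cost) applied to an optimal covering solution shows that the optimal values of~\eqref{ILP:relaxFAS(P)} and~\eqref{ILP:FAS} coincide; a feasible point of~\eqref{ILP:FAS} of cost at most $w\cdot\delta$ can then be produced by one linear-feasibility computation over the face where~\eqref{2loop} holds with equality and $w\cdot\delta'\le w\cdot\delta$. I would nonetheless prefer the explicit local-move proof, since it handles the integral case and exposes exactly how the triangle inequality is consumed.
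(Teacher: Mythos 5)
Your reduction in the first paragraph is sound and matches the paper's implicit setup: taking $y_1=x_2$, $y_2=x_3$, $y_3=x_1$ in~\eqref{vc3P} and using reflexivity of $P$ recovers~\eqref{transitivity} on incomparable pairs, so the whole task is indeed to eliminate the excess $\delta_{(i,j)}+\delta_{(j,i)}-1>0$ on incomparable pairs without increasing the cost. But from that point on the proposal describes the shape of an argument rather than giving one, and you say so yourself: the existence of an excess-reducing move of nonpositive cost is exactly ``where the main work lies,'' and it is left unestablished. This is the entire content of the theorem, and your proposed mechanism --- augmenting along a path or cycle in a residual digraph of tight constraints, with precedence-arc weights cancelling between consecutive steps --- is not the mechanism that actually works, and there is no evidence it can be made to work: a single exchange across a tight constraint can strictly increase the cost, and nothing in the triangle inequality guarantees that some \emph{single} path or cycle closes up with nonpositive total change.

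What the paper does instead is qualitatively different and global. It passes to a \emph{minimal} feasible solution of~\eqref{ILP:relaxFAS(P)}, shows (via minimality and a propagation lemma along $P$) that every contradicting pair sits inside a rigid configuration called a basic triple $(a,c,b)$ with $\delta^*_{(a,c)}=\delta^*_{(c,b)}=\delta^*_{(a,b)}=\delta^*_{(b,a)}=1$ and $\delta^*_{(c,a)}=\delta^*_{(b,c)}=0$, and for each vertex $v$ defines two candidate moves: delete all arcs of $M_v$ and simultaneously reverse all arcs of $S_v$ (or of $E_v$), where $S_v,M_v,E_v$ collect the arcs of basic triples having $v$ in the first, middle, or last position. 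Proving that each such bulk move preserves feasibility of the covering constraints is itself a nontrivial case analysis. The cost bound is then an averaging argument: assuming all $2|V|$ candidate moves strictly increase the cost, summing those $2|V|$ inequalities and comparing with the sum, over all basic triples, of the two triangle inequalities $w_{(c,a)}\le w_{(c,b)}+w_{(b,a)}$ and $w_{(b,c)}\le w_{(b,a)}+w_{(a,c)}$ yields a contradiction, because the two double sums reorganize into identical expressions. No precedence-arc weights appear or cancel; the cancellation is between the per-vertex move costs and the per-triple triangle inequalities. Your proposal contains none of these ingredients (minimality, basic triples, the sets $S_v,M_v,E_v$, the feasibility case analysis, the averaging over all vertices), so the proof has a genuine gap at its central step.
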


We emphasize that a straightforward application of Theorem~\ref{th:3vc} does not imply a better approximation algorithm for the (constrained) \textsc{MinFAS} with triangle inequality.
However, Theorem~\ref{th:3vc} gives a new surprising structural insight that opens the road to studying the problem under a new light which can benefit from the vast literature and techniques developed for covering problems (this was actually the case for the previously cited scheduling problem~\cite{AM09,07AmbMasMutSve,07AmbMasSve,KhotBansalSchedHardness09,CorreaSchulz04,km11} where the vertex cover insight was essential to obtain improved lower/upper bounds on the approximation ratio). 
Moreover, the fractional relaxation of~\eqref{ILP:relaxFAS(P)}, obtained by dropping the integrality requirement, is a positive linear program and, therefore, we can obtain fast \emph{combinatorial} approximation algorithms that match the best known approximation algorithms up to an arbitrarily small error $\epsilon>0$, by first approximately solving the fractional relaxation of~\eqref{ILP:relaxFAS(P)}~\cite{LubyN93,mor95-plotkin-fast}, then using Theorem~\ref{th:3vc}, and finally applying the rounding algorithms in \cite{Ailon10,ZuylenW09}. (The only known combinatorial approach that matches the best known ratio for the constrained \textsc{MinFAS} with triangle inequality was obtained in~\cite{ZuylenW09} with the additional assumption that the input is ``consistent'' with the constraints, i.e., $w_{(i,j)} = 0$ for $(i,j)\in P$.)

The arguments that we use to prove Theorem~\ref{th:3vc} have some similarities, but also substantial differences from those used to prove the vertex cover nature of problem $\prob$~\cite{AM09}.
The differences come from the diversity of the two weighting functions that make, for example, the scheduling problem without precedence constraints a trivial problem and the (unconstrained) \textsc{MinFAS} with triangle inequality NP-complete.
However, we believe that they both belong to a more general framework, that still has  to be understood, and that may reveal the vertex cover nature of several other natural \textsc{MinFAS} problems (see Section~\ref{Sec:open} for a conjecture).

In the next section we prove Theorem~\ref{th:3vc} by showing how to ``repair'' in polynomial time any feasible solution to~\eqref{ILP:relaxFAS(P)} to obtain a feasible solution to~\eqref{ILP:FAS} that satisfies the claim (similar arguments can be used to generalize the claim to fractional solutions, but details are omitted in this extended abstract).
%
We conclude the paper with a conjecture locating the addressed problem into a general hierarchy within \textsc{MinFAS}.


\section{Proof of Theorem~\ref{th:3vc}}\label{sec:3vc}

In this section we prove Theorem~\ref{th:3vc} for integral solutions. The proof for fractional solutions is similar and omitted due to space limitations. The structure of the proof is as follows.
Consider any \emph{minimal} integral solution\footnote{Recall that a $0\setminus1$ solution $\delta^*$ is \emph{minimal} if the removal of any arc $(i,j)$ from its support makes it unfeasible.} $\delta^*=\{\delta^*_{(i,j)}:\mbox{for all } i,j\}$ that is feasible to~\eqref{ILP:relaxFAS(P)}, but violates Constraint~\eqref{2loop}. Let us say that pair $\{i,j\}$ is \emph{contradicting} if $\delta_{(i,j)}^*=\delta_{(j,i)}^*=1$.
The violation of Constraint~\eqref{2loop} implies that there exists a non-empty set $A$ of contradicting pairs. The minimality of $\delta^*$ implies that the removal of one of the two arcs of a contradicting pair yields an infeasible solution to~\eqref{ILP:relaxFAS(P)}. The proof works by identifying a subset $A'\subseteq A$ of contradicting pairs, together with another set $B$ of arcs such that, by removing one of the two arcs in any pair from $A'$ and by reverting the arcs in $B$, we obtain a feasible solution to~\eqref{ILP:relaxFAS(P)} with a strictly smaller set of contradicting pairs. Moreover, the new solution is shown to be at least as good as the old one (here we use the assumption that the weighting function is hemimetric). By reiterating the same arguments  we end up with a solution where no contradicting pair exists, i.e. feasible for~\eqref{ILP:FAS}, of value not worse than the initial one.

We start with a preliminary simple observation that characterizes minimal solutions and that will be used several times.
%
\begin{lemma}\label{Lemma:prec}
For any feasible minimal solution $\delta^*=\{\delta^*_{(i,j)}:\mbox{for all } i,j\}$ to~\eqref{ILP:relaxFAS(P)} and any $i,j,k,\ell \in V$ such that $j\not = k$ and $i\not = \ell$, if $\delta^*_{(j,k)}=1$, $\delta^*_{(k,j)}=0$  and $(i,j),(k,\ell)\in P$ then $\delta^*_{(i,\ell)}=1$ and $\delta^*_{(\ell,i)}=0$.
\end{lemma}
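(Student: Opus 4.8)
The plan is to split the conclusion into its two halves, $\delta^*_{(i,\ell)}=1$ and $\delta^*_{(\ell,i)}=0$, and treat them separately. Throughout I read $\delta^*$ as extended to comparable pairs in the only consistent way, namely $\delta^*_{(a,b)}=1$ whenever $(a,b)\in P$ and $\delta^*_{(a,b)}=0$ whenever $(b,a)\in P$; this is what makes the statement meaningful for indices that need not lie in $\inc(\mathbf{P})$. The guiding intuition is that $(i,j)\in P$, $\delta^*_{(j,k)}=1$ and $(k,\ell)\in P$ encode the chain $i\to j\to k\to\ell$, so $i$ must precede $\ell$; the covering constraints are precisely the acyclicity certificates that turn this intuition into forced variable values.

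For the first half I would first dispose of the comparable cases for the pair $\{i,\ell\}$. If $(i,\ell)\in P$ the extension convention already gives $\delta^*_{(i,\ell)}=1$. The opposite orientation $(\ell,i)\in P$ must be excluded, and it is: combining $(\ell,i)$ with $(i,j)\in P$ and $(k,\ell)\in P$ yields $(k,j)\in P$ by transitivity, whence $\delta^*_{(k,j)}=1$, contradicting the hypothesis $\delta^*_{(k,j)}=0$. In the remaining case $(i,\ell)\in\inc(\mathbf{P})$ I would instantiate~\eqref{vc2P} on the pairs $(i,\ell)$ and $(k,j)$: since $(k,\ell)\in P$ and $(i,j)\in P$ are exactly the two membership conditions required, the inequality $\delta^*_{(i,\ell)}+\delta^*_{(k,j)}\ge 1$ is valid, and $\delta^*_{(k,j)}=0$ forces $\delta^*_{(i,\ell)}=1$. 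A short check using $\delta^*_{(k,j)}=0$ and the convention shows $(k,j)$ is itself incomparable, so this constraint genuinely occurs. Note this half uses only feasibility, not minimality.

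The second half, $\delta^*_{(\ell,i)}=0$, is where minimality enters and is the main obstacle. I argue by contradiction: if $\delta^*_{(\ell,i)}=1$ then, $\delta^*$ being minimal, deleting the arc $(\ell,i)$ destroys feasibility, so some constraint of type~\eqref{vc2P} or~\eqref{vc3P} contains $\delta_{(\ell,i)}$ while all its other variables are $0$ under $\delta^*$. Using the cyclic symmetry of these constraints I may assume $\delta_{(\ell,i)}$ is the distinguished variable; taking the three-variable case~\eqref{vc3P} as the representative (harder) one, the constraint reads $\delta_{(\ell,i)}+\delta_{(x_2,y_2)}+\delta_{(x_3,y_3)}\ge 1$ with $(x_2,i),(x_3,y_2),(\ell,y_3)\in P$ and $\delta^*_{(x_2,y_2)}=\delta^*_{(x_3,y_3)}=0$. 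The key move is an ``arc replacement'': from $(x_2,i),(i,j)\in P$ I get $(x_2,j)\in P$, and from $(k,\ell),(\ell,y_3)\in P$ I get $(k,y_3)\in P$. These, together with $(x_3,y_2)\in P$, are exactly the membership conditions making $\delta_{(x_2,y_2)}+\delta_{(x_3,y_3)}+\delta_{(k,j)}\ge 1$ a valid instance of~\eqref{vc3P}; but its three variables are all $0$ under $\delta^*$ (the first two by assumption, the third by the hypothesis $\delta^*_{(k,j)}=0$), contradicting feasibility. The two-variable case~\eqref{vc2P} is identical with the middle term suppressed. Hence no constraint can force $(\ell,i)$, contradicting minimality, so $\delta^*_{(\ell,i)}=0$.

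The routine but necessary bookkeeping I expect to spend care on is the degenerate situations hidden in the replacement step: that the replacement pair $(k,j)$ is distinct from the surviving pairs, that (in the two-variable case) $x_2\neq y_2$, and that each pair appearing in the companion constraint is incomparable so that the constraint really belongs to~\eqref{ILP:relaxFAS(P)}. Each of these reduces, via the extension convention and one more transitivity step, to the same contradiction with $\delta^*_{(k,j)}=0$ (for instance, $x_2=y_2$ would give $(x_2,i),(\ell,x_2)\in P$, hence $(\ell,i)\in P$, already excluded). None of them is a genuine gap, but they are what the final write-up must still verify.
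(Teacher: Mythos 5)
Your proposal is correct and follows essentially the same route as the paper: the first half comes from instantiating~\eqref{vc2P} on $(i,\ell)$ and $(k,j)$ using $(k,\ell),(i,j)\in P$, and the second half uses minimality to extract a constraint forcing $\delta_{(\ell,i)}$ and then replaces $(\ell,i)$ by $(k,j)$ via transitivity of $P$ to exhibit a violated valid constraint. Your extra bookkeeping (the extension convention for comparable pairs and the degeneracy checks) only makes explicit what the paper's terser argument leaves implicit.
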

\begin{proof}
Note that $\delta_{(i,\ell)}+\delta_{(k,j)}\geq 1$ is part of constraints~\eqref{vc2P}, therefore by the assumptions we have $\delta^*_{(i,\ell)}=1$.

By contradiction, assume that $\delta^*_{(\ell,i)}=1$. By minimality of solution $\delta^*$, there must be a constraint that would be violated if we set $\delta^*_{(\ell,i)}$ to zero.
The latter means that there are incomparable pairs $(x_2,y_2)$ and $(x_3,y_3)$ such that either (i) the following is a valid constraint~\eqref{vc2P} with $\delta^*_{(x_2,y_2)}=0$
$$\delta_{(\ell,i)}+\delta_{(x_2,y_2)}\geq 1,$$
or (ii) the following is a valid constraint~\eqref{vc3P}
$$\delta_{(\ell,i)}+\delta_{(x_2,y_2)}+\delta_{(x_3,y_3)}\geq 1,$$
with $\delta^*_{(x_2,y_2)}=\delta^*_{(x_3,y_3)}=0$.
Case (i) implies that $\delta_{(k,j)}+\delta_{(x_2,y_2)}\geq 1$ is a valid constraint that is violated by solution $\delta^*$.
Similarly, Case (ii) implies that $\delta_{(k,j)}+\delta_{(x_2,y_2)}+\delta_{(x_3,y_3)}\geq 1$ is a valid constraint that is violated by solution $\delta^*$.
\qd
\end{proof}

Let $\delta^*=\{\delta^*_{(i,j)}:\mbox{for all } i,j\}$ be an $\alpha$-approximate minimal solution to~\eqref{ILP:relaxFAS(P)}.
For any triple $(a,c,b)\in V^3$ of distinct vertices, we say that $(a,c,b)$ is a \emph{basic triple} if the following holds (see Fig.~\ref{fig:basic_triple}): $\delta^*_{(a,c)}= \delta^*_{(c,b)}=\delta^*_{(a,b)}= \delta^*_{(b,a)}=1$ and $\delta^*_{(c,a)}= \delta^*_{(b,c)}=0$. Let $T$ be the set of all the basic triples. The following lemma states that basic triples are witnesses of infeasibility.
 \begin{figure}[hbtp]
    \centering
    \includegraphics[width=3cm]{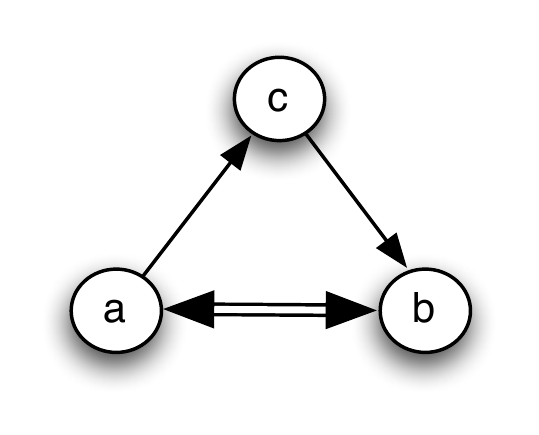}
    \caption{Basic triple: $\delta^*_{(a,c)}= \delta^*_{(c,b)}=\delta^*_{(a,b)}= \delta^*_{(b,a)}=1$, $\delta^*_{(c,a)}= \delta^*_{(b,c)}=0$. }
  \label{fig:basic_triple}
\end{figure}
\begin{lemma}\label{lemma:T}
If solution $\delta^*$ is a minimal solution to~\eqref{ILP:relaxFAS(P)} but not feasible to~\eqref{ILP:FAS}, then $T\not=\emptyset$.
\end{lemma}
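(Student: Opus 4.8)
The plan is to start from an arbitrary contradicting pair and to exhibit a basic triple built around it. Since $\delta^*$ is feasible for~\eqref{ILP:relaxFAS(P)}, for every incomparable pair $\{i,j\}$ the (reflexivity-induced) instance $\delta_{(i,j)}+\delta_{(j,i)}\ge 1$ of~\eqref{vc2P} rules out both values being $0$, while comparable pairs have their orientation fixed by $P$; hence the only way $\delta^*$ can violate~\eqref{2loop} is through a contradicting pair, and by hypothesis at least one such pair $\{a,b\}$ exists, with $\delta^*_{(a,b)}=\delta^*_{(b,a)}=1$. The driving idea is to exploit minimality: since $(a,b)$ lies in the support, setting $\delta^*_{(a,b)}$ to $0$ must violate some constraint of~\eqref{ILP:relaxFAS(P)}, and that constraint necessarily contains $\delta_{(a,b)}$ with all of its remaining literals equal to $0$. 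I would then read off the ``middle'' vertex $c$ of the basic triple from the witnesses of this violated constraint and use Lemma~\ref{Lemma:prec} to force the clean orientations $\delta^*_{(a,c)}=\delta^*_{(c,b)}=1$, $\delta^*_{(c,a)}=\delta^*_{(b,c)}=0$.

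First I would rule out that the violated constraint is of type~\eqref{vc2P}. Up to the symmetry of~\eqref{vc2P}, such a constraint reads $\delta_{(a,b)}+\delta_{(x_2,y_2)}\ge 1$ with $(x_2,b),(a,y_2)\in P$ and $\delta^*_{(x_2,y_2)}=0$; the reflexive instance of~\eqref{vc2P} then forces $\delta^*_{(y_2,x_2)}=1$. Applying Lemma~\ref{Lemma:prec} with $j=y_2,\,k=x_2,\,i=a,\,\ell=b$ (legitimate since $x_2\ne y_2$ and $a\ne b$) yields $\delta^*_{(b,a)}=0$, contradicting that $\{a,b\}$ is contradicting. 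Hence the violated constraint must be of type~\eqref{vc3P}.

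So, up to the cyclic symmetry of~\eqref{vc3P}, minimality provides $(x_2,y_2),(x_3,y_3)\in\inc(\mathbf{P})$ with $(x_2,b),(x_3,y_2),(a,y_3)\in P$ and $\delta^*_{(x_2,y_2)}=\delta^*_{(x_3,y_3)}=0$, whence $\delta^*_{(y_2,x_2)}=\delta^*_{(y_3,x_3)}=1$. I would then transport these two directed arcs along the precedence chains by two applications of Lemma~\ref{Lemma:prec}: from $\delta^*_{(y_2,x_2)}=1$ with $j=y_2,\,k=x_2,\,i=y_2$ (using the reflexive $(y_2,y_2)\in P$) and $\ell=b$ (using $(x_2,b)\in P$) I obtain the clean arc $\delta^*_{(y_2,b)}=1,\ \delta^*_{(b,y_2)}=0$; from $\delta^*_{(y_3,x_3)}=1$ with $j=y_3,\,k=x_3,\,i=a$ (using $(a,y_3)\in P$) and $\ell=y_2$ (using $(x_3,y_2)\in P$) I obtain $\delta^*_{(a,y_2)}=1,\ \delta^*_{(y_2,a)}=0$. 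Together with the contradicting pair $\{a,b\}$ this makes $(a,y_2,b)$ a basic triple, so $T\ne\emptyset$.

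The main obstacle is the bookkeeping of degenerate coincidences that can invalidate a Lemma~\ref{Lemma:prec} application through its distinctness hypotheses $j\ne k$ and $i\ne\ell$. The only genuinely new subcase is $a=y_2$, where the derivation of $\delta^*_{(a,y_2)}$ breaks down because it requires $i\ne\ell$; here I would instead read the clean arc $a\to x_2$ directly off $\delta^*_{(x_2,a)}=0$ (so $\delta^*_{(a,x_2)}=1$) and combine it with $(x_2,b)\in P$, producing the basic triple $(a,x_2,b)$. The remaining coincidences, such as $y_2=b$, are excluded outright since an incomparable pair cannot lie in $P$ (here $(x_2,y_2)\in\inc(\mathbf{P})$ while $(x_2,b)\in P$); once these are checked, the construction goes through and the lemma follows.
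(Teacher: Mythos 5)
Your proposal is correct and takes essentially the same route as the paper: it identifies the constraint of type~\eqref{vc3P} that witnesses the minimality of $\delta^*_{(a,b)}$ for a contradicting pair $\{a,b\}$ and applies Lemma~\ref{Lemma:prec} twice (once via a reflexive pair of $P$) to extract the basic triple $(a,y_2,b)$, which is exactly the paper's triple $(a,d,b)$. You additionally rule out explicitly that the minimality witness could be a constraint of type~\eqref{vc2P} and handle the coincidence $a=y_2$, both of which the paper leaves implicit, so your write-up is if anything more complete.
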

\begin{proof}
Assume that $\delta^*_{(a,b)}= \delta^*_{(b,a)}=1$. Variable $\delta^*_{(a,b)}$ cannot be turned to zero because there exists $c,d,e,f\in V$ such that $\delta^*_{(c,d)}= \delta^*_{(e,f)}=0$ and the following is a valid constraint~\eqref{vc3P}
$$\delta_{(a,b)}+\delta_{(c,d)}+\delta_{(e,f)}\geq 1.$$
By a simple application of Lemma~\ref{Lemma:prec} (see Fig.~\ref{fig:triple_existence}) it follows that $(a,b,d)$ is a basic triple.
 \begin{figure}[hbtp]
    \centering
    \includegraphics[width=3cm]{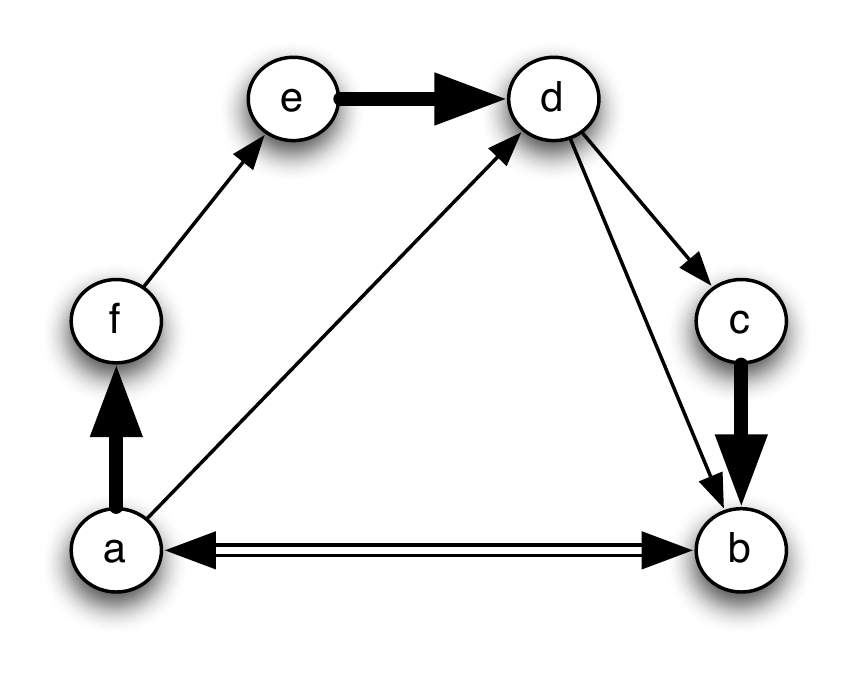}
    \caption{Existence of a basic triple $(a,d,b)$ assuming $\delta^*_{(a,b)}= \delta^*_{(b,a)}=1$. Bold arrows represent poset relationship, namely $(a,f),(e,d),(c,b)\in P$.}
  \label{fig:triple_existence}
\end{figure}
\qd
\end{proof}

For any given vertex $v$, let us define the following set of arcs that will be used to ``drop and reverse'' arcs in a synchronized way to obtain new solutions:
\begin{eqnarray}
S_v&=&\{(i,j): (v,i,j)\in T\} \label{S_v}.\\
M_v&=&\{(i,j): (j,v,i)\in T\} \label{M_v}.\\
E_v&=&\{(i,j): (i,j,v)\in T\} \label{E_v}.
\end{eqnarray}
Note that $S_v$, $M_v$ and $E_v$ are pairwise disjoint.

\begin{lemma}\label{th:newsol}
For any $v\in V$ and $X\in\{S_v,E_v\}$, solution $\delta^X=\{\delta^X_{(i,j)}:\mbox{for all } i,j\}$ as defined in the following is a feasible solution for~\eqref{ILP:relaxFAS(P)}:
\begin{enumerate}
\item $\delta^X_{(i,j)}=0$ for each $(i,j)\in M_v$.
\item $\delta^X_{(i,j)}=0$ and $\delta^X_{(j,i)}=1$ for each $(i,j)\in X$.
\item $\delta^X_{(i,j)}=\delta^*_{(i,j)}$ elsewhere.
\end{enumerate}
\end{lemma}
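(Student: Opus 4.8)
The plan is to verify directly that every constraint of~\eqref{ILP:relaxFAS(P)} is satisfied by $\delta^X$. First I would record that $\delta^X$ differs from $\delta^*$ only on a controlled set of coordinates: by items~1 and~2 the only variables lowered from $1$ to $0$ are those indexed by $M_v\cup X$, while the only variables raised from $0$ to $1$ are the reversals $\delta_{(j,i)}$ of the arcs $(i,j)\in X$. Consequently a constraint can be newly violated only if it contains at least one variable indexed by $M_v\cup X$, and for it to be violated every term equal to $1$ under $\delta^*$ must in fact lie in $M_v\cup X$, and no term may be one of the raised reversals. So it suffices to assume, for contradiction, that some constraint $C$ (of type~\eqref{vc2P} or~\eqref{vc3P}) has all its terms equal to $0$ under $\delta^X$, and then to exhibit a term of $C$ that is really $1$.

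The engine of the argument is Lemma~\ref{Lemma:prec} together with the explicit $0/1$ pattern carried by the basic triples through~$v$ (as recorded in the definitions~\eqref{S_v}--\eqref{E_v}), and one elementary fact: since $\delta_{(x,y)}+\delta_{(y,x)}\ge 1$ is itself an instance of~\eqref{vc2P} whenever $(x,y)\in\inc(\mathbf{P})$, every coordinate that $\delta^*$ sets to $0$ has its reverse set to $1$, producing a clean (non-contradicting) arc. I would feed these clean arcs, together with the precedence relations $(x_2,y_1),(x_3,y_2),(x_1,y_3)\in P$ that define $C$, into Lemma~\ref{Lemma:prec}. The objective in each branch is to force one surviving coordinate of $C$ to equal $1$ under $\delta^X$: either an untouched coordinate that $\delta^*$ already set to $1$ (for instance an arc incident to $v$ that lies in none of $S_v,M_v,E_v$), or a reversed coordinate $\delta_{(j,i)}$ with $(i,j)\in X$, which the case analysis is designed to locate inside $C$.

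Concretely I would split on whether the vanished $1$-term of $C$ belongs to $M_v$ or to $X$, and on whether $C$ has two or three terms; in each branch I would use the basic triple witnessing membership in $M_v$ or $X$ to read off the precedence arcs that, via Lemma~\ref{Lemma:prec} and the clean arcs above, pin down the compensating term. This is exactly the delicate point and the step I expect to be the main obstacle: a dropped $M_v$-arc may be the \emph{unique} cover of $C$, so recovery must come from one of the reversed $X$-arcs, and one has to show that the $P$-relations defining $C$ together with the triple structure around~$v$ genuinely place such a reversed arc inside $C$. Establishing this matching between orphaned constraints and reversed arcs is the heart of the proof; the remaining verifications are routine bookkeeping.

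Finally I would dispose of the case $X=E_v$ by symmetry rather than repeating the analysis. Reversing every arc, $(x,y)\mapsto(y,x)$, together with replacing $P$ by its reverse, is an automorphism of the constraint system~\eqref{ILP:relaxFAS(P)}; it carries minimal feasible solutions to minimal feasible solutions, sends a basic triple $(a,c,b)$ to the basic triple $(b,c,a)$, and therefore interchanges the families $S_v$ and $E_v$ while mapping $M_v$ to itself and turning the $E_v$-operation into the $S_v$-operation. Hence the feasibility of $\delta^{E_v}$ for a poset $\mathbf{P}$ is equivalent to the feasibility of $\delta^{S_v}$ for the reversed poset, and the single case $X=S_v$ suffices.
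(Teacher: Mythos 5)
Your overall strategy coincides with the paper's: you correctly reduce feasibility of $\delta^X$ to the constraints meeting $M_v\cup X$, you identify Lemma~\ref{Lemma:prec} plus the $0/1$ pattern of basic triples as the tools, and you correctly locate the crux in the case where the dropped arc was the \emph{unique} cover of a constraint, so that a reversed $X$-arc must be found inside it. Your reversal automorphism for $X=E_v$ is also sound, and in fact more explicit than the paper's one-line appeal to symmetry. But the proposal stops exactly where the proof has to begin: you write that one "has to show that the $P$-relations defining $C$ together with the triple structure around~$v$ genuinely place such a reversed arc inside $C$" and that this "is the heart of the proof" --- and then you do not show it. Announcing the case split is not the same as carrying it out, and this particular case does not follow by routine bookkeeping from the ingredients you list.

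To see what is missing, compare with the paper's argument for a constraint $\delta_{(i,j')}+\delta_{(j,k')}+\delta_{(k,i')}\geq 1$ with $(i,j')\in X\cup M_v$ and $X=S_v$. When $\delta^*_{(j,k')}=1$ or $\delta^*_{(k,i')}=1$, Lemma~\ref{Lemma:prec} shows that arc cannot itself lie in $S_v\cup M_v$, so it survives untouched (Cases~(a) and~(b)); this is your "untouched coordinate" branch and is indeed straightforward. The hard case is $\delta^*_{(j,k')}=\delta^*_{(k,i')}=0$: the paper first deduces $\delta^*_{(v,k)}=1$ and splits on $\delta^*_{(k,v)}$. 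If $\delta^*_{(k,v)}=1$ then $(i',k)\in S_v$ and the reversal rescues the constraint. But if $\delta^*_{(k,v)}=0$, one must invoke the \emph{minimality} of $\delta^*$ a second time --- applied to the contradicting pair $\{v,j'\}$ --- to produce an auxiliary vertex $q$, not among the six vertices naming the constraint, with $\delta^*_{(j',q)}=\delta^*_{(q,v)}=1$ and $\delta^*_{(q,j')}=\delta^*_{(v,q)}=0$; only then does Lemma~\ref{Lemma:prec} certify $(k',j)\in S_v$ and hence $\delta^X_{(j,k')}=1$. Your plan of "feeding the clean arcs and the $P$-relations of $C$ into Lemma~\ref{Lemma:prec}" cannot close this branch on its own, because nothing in the constraint $C$ itself guarantees the existence of the basic triple through $v$ that supplies the compensating reversed arc; that existence comes from minimality via the extra vertex $q$, an ingredient absent from your outline. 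Until that branch is actually executed, the proof is incomplete.
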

\begin{proof}
We start showing that solution $\delta^X$ satisfies the second set~\eqref{vc3P} of constraints in~\eqref{ILP:relaxFAS(P)} for any $X\in\{S_v,E_v\}$. The proof that $\delta^X$ satisfies the first set of constraints~\eqref{vc2P} is similar.

Let us assume that $X=S_v$ (the proof for $X=E_v$ is symmetric).
  Since solution $\delta^X$ is obtained from the feasible solution $\delta^*$ by switching some variables to zero and others to one, we might violate only those constraints with at least one variable from $\delta^X$ that is turned to zero, i.e. the set of constraints that have at least one variable from $\{\delta^X_{(i,j)} :(i,j)\in X\cup M_v\}$. Let $(i,j')\in X\cup M_v$ and for any $j,k',k,i'\in V$ such that $\delta_{(i,j')}+\delta_{(j,k')} +\delta_{(k,i')} \geq 1$ is a valid constraint~\eqref{vc3P}, we want to prove that the following holds:
\begin{equation}
\delta^X_{(i,j')}+\delta^X_{(j,k')} +\delta^X_{(k,i')} \geq 1.
\end{equation}

We distinguish between the following cases:
\begin{itemize}
\item[Case (a):] $\delta^*_{(j,k')}=1$. Since $(i,j')\in S_v\cup M_v$ then $\delta^*_{(j',v)}=1$ (see Fig.~\ref{fig:caseA}).

 If $(i,j')\in M_v$ then $\delta^*_{(j',v)}=1$ and $\delta^*_{(v,j')}=0$. By applying Lemma~\ref{Lemma:prec} we can conclude that  $\delta^*_{(j,v)}=1$.

If $(i,j')\in S_v$ we claim that  $\delta^*_{(j,v)}=1$ as well. By contradiction assume  $\delta^*_{(j,v)}=0$ and therefore  $\delta^*_{(v,j)}=1$. By applying Lemma~\ref{Lemma:prec} we would have $\delta^*_{(v,j')}=1$ and $\delta^*_{(j',v)}=0$. The latter contradicts the assumption that $(i,j')\in S_v$.

Since $\delta^*_{(j,v)}=1$, we have $(j,k')\not \in S_v\cup M_v$ (since if $(j,k')\in S_v\cup M_v$ then $\delta^*_{(j,v)}=0$ as shown in Fig.~\ref{fig:caseA})   and therefore $\delta^X_{(j,k')}=\delta^*_{(j,k')}=1$.
\item[Case (b):] $\delta^*_{(k,i')}=1$. Since $(i,j')\in S_v\cup M_v$ then $\delta^*_{(i,v)}=0$ (see Fig.~\ref{fig:caseB}) and $\delta^*_{(i',v)}=0$ by Lemma~\ref{Lemma:prec}. The latter implies that $(k,i')\not \in S_v\cup M_v$ (since if $(k,i')\in S_v\cup M_v$ then $\delta^*_{(i',v)}=1$ as shown in Fig.~\ref{fig:caseB}) and therefore $\delta^X_{(k,i')}=\delta^*_{(k,i')}=1$.
%
%
\item[Case (c):] $\delta^*_{(j,k')}=\delta^*_{(k,i')}=0$. Under the current assumption, by Lemma~\ref{Lemma:prec} and constraint~\eqref{vc3P}, it is easy to check that $\delta^*_{(v,k)}=1$ (see  Fig.~\ref{fig:caseC}). We distinguish between two subcases: (i) $\delta^*_{(k,v)}=1$ (if $(i,j')\in M_v$ this is the only possible case) and (ii) $\delta^*_{(k,v)}=0$. If (i) holds then $(i',k)\in S_v$ and therefore $\delta^X_{(k,i')}=1$. Otherwise, by Lemma~\ref{Lemma:prec} we have $\delta^*_{(v,k')}=1$ and $\delta^*_{(k',v)}=0$. Moreover, since under (ii) we have $\delta^*_{(v,j')}=\delta^*_{(j',v)}=1$, by minimality of the solution, there exists a node $q$ such that $\delta^*_{(j',q)}=\delta^*_{(q,v)}=1$ and $\delta^*_{(q,j')}=\delta^*_{(v,q)}=0$. By applying Lemma~\ref{Lemma:prec} we have $\delta^*_{(j,q)}=1$ and $\delta^*_{(q,j)}=0$. Therefore, $\delta^*_{(v,k')}=\delta^*_{(k',j)}=\delta^*_{(j,q)}=\delta^*_{(q,v)}=1$ and $\delta^*_{(k',v)}=\delta^*_{(v,q)}=\delta^*_{(q,j)}=\delta^*_{(j,k')}=0$ imply that $(k',j)\in S_v$ which implies that $\delta^X_{(j,k')}=1$.
%
\end{itemize}
%
\qd
\end{proof}

According to solution $\delta^*$, let us say that pair $\{i,j\}$ is \emph{contradicting} if $\delta_{(i,j)}^*=\delta_{(j,i)}^*=1$.
By Lemma~\ref{th:newsol}, any solution $\delta'\in\Delta=\{\delta^X:v\in V \mbox{ and } X\in \{S_v,E_v\}\}$ is a feasible solution for~(\ref{ILP:relaxFAS(P)}). Moreover, it is easy to observe that $\delta'$ has a strictly smaller number of contradicting pairs.

The claim of Theorem~\ref{th:3vc} follows by proving the following Lemma~\ref{th:better} which shows that among the feasible solutions in $\Delta$ there exists one whose value is not worse than the value of $\delta^*$. Therefore, after at most $O(|V|^2)$ ``applications'' of Lemma~\ref{th:better} we end up with a solution where no contradicting pair exists, i.e. feasible for~\eqref{ILP:FAS}.

\begin{lemma}\label{th:better}
If $\delta^*$ is not a feasible solution for~\eqref{ILP:FAS} then there exists a feasible solution for~\eqref{ILP:relaxFAS(P)} in $\Delta=\{\delta^X:v\in V \mbox{ and } X\in \{S_v,E_v\}\}$ whose value is not worse than the value of $\delta^*$.
\end{lemma}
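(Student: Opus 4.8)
The plan is to compute the objective change of each candidate in $\Delta$ explicitly and then \emph{average over all candidates}, collapsing everything to a single inequality per basic triple that is immediate from the triangle inequality. Write $\mathrm{obj}(\delta)=\sum_{i\neq j}\delta_{(i,j)}w_{(i,j)}$. First I would record the change produced by each operation. Both $\delta^{S_v}$ and $\delta^{E_v}$ set to $0$ every arc of $M_v$, and in addition reverse the arcs of $X$. Since $\delta^*_{(i,j)}=1$ for $(i,j)\in M_v$, and $\delta^*_{(i,j)}=1,\ \delta^*_{(j,i)}=0$ for $(i,j)\in X$, this gives
\[
\mathrm{obj}(\delta^{X})-\mathrm{obj}(\delta^*)=-\sum_{(i,j)\in M_v} w_{(i,j)}+\sum_{(i,j)\in X}\bigl(w_{(j,i)}-w_{(i,j)}\bigr),\qquad X\in\{S_v,E_v\}.
\]
A naive attempt to bound this for a \emph{fixed} $v$ fails, because the dropped arcs of $M_v$ and the reversed arcs of $S_v$ or $E_v$ run between different parts of the order around $v$ and do not match up term by term. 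The key idea is therefore to sum the displayed quantity over all $v\in V$ and both choices of $X$.

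Next I would re-index this double sum by basic triples. A triple $(a,c,b)\in T$ (with present arcs $(a,c),(c,b),(b,a),(a,b)$ and absent arcs $(c,a),(b,c)$) contributes to exactly three of the sets: it lies in $M_v$ only for the middle vertex $v=c$, in $S_v$ only for the start $v=a$, and in $E_v$ only for the end $v=b$. Its contribution to $M_c$ is the dropped arc $(b,a)$, charged once in $\mathrm{obj}(\delta^{S_c})$ and once in $\mathrm{obj}(\delta^{E_c})$, hence $-2w_{(b,a)}$; its contribution to $S_a$ is the reversal of $(c,b)$, namely $w_{(b,c)}-w_{(c,b)}$; its contribution to $E_b$ is the reversal of $(a,c)$, namely $w_{(c,a)}-w_{(a,c)}$. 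Collecting terms,
\[
\sum_{v\in V}\Bigl[\bigl(\mathrm{obj}(\delta^{S_v})-\mathrm{obj}(\delta^*)\bigr)+\bigl(\mathrm{obj}(\delta^{E_v})-\mathrm{obj}(\delta^*)\bigr)\Bigr]=\sum_{(a,c,b)\in T}\Bigl[-2w_{(b,a)}+w_{(b,c)}-w_{(c,b)}+w_{(c,a)}-w_{(a,c)}\Bigr].
\]

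The final step is to show each summand on the right is nonpositive. This follows from two triangle inequalities of the hemimetric, $w_{(b,c)}\le w_{(b,a)}+w_{(a,c)}$ (through $a$) and $w_{(c,a)}\le w_{(c,b)}+w_{(b,a)}$ (through $b$); adding them yields $w_{(b,c)}+w_{(c,a)}\le 2w_{(b,a)}+w_{(a,c)}+w_{(c,b)}$, i.e. the bracket is $\le 0$. Hence the whole left-hand sum is $\le 0$. Since it is a sum of $2|V|$ terms, each of the form $\mathrm{obj}(\delta^{X})-\mathrm{obj}(\delta^*)$, at least one of them is $\le 0$, producing a solution in $\Delta$ whose value is not worse than that of $\delta^*$, as claimed.

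The main obstacle I expect is the bookkeeping behind the re-indexing: one must verify that each triple falls into exactly one $M$-, one $S$-, and one $E$-slot (with the three distinct pivots $c,a,b$), and that the dropped arc of $M_v$ is counted twice because it is removed by both $\delta^{S_v}$ and $\delta^{E_v}$; getting this factor of $2$ is precisely what makes the two triangle inequalities exactly sufficient. Two further details deserve a line of checking: that every modified arc is an incomparable pair (so that the identity for $\mathrm{obj}$ and all three triangle inequalities are legitimate in the poset setting), and that to keep the induction of Theorem~\ref{th:3vc} going one should select a minimizing candidate with $M_v\neq\emptyset$, which is possible since $T\neq\emptyset$ by Lemma~\ref{lemma:T} and such a candidate strictly decreases the number of contradicting pairs. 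The fractional case is handled by applying the same per-triple inequality to the fractional increments of the reversed and dropped variables.
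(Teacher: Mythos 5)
Your proposal is correct and is essentially the paper's own argument: the paper also sums the cost changes of all $2|V|$ candidates (phrased contrapositively, as the sum of the strict inequalities \eqref{sv} and \eqref{ev} over all $v$), re-indexes the total by basic triples via the identities $RHS(1)=LHS(2)$ and $RHS(2)=LHS(1)$, and defeats it with the same two triangle inequalities \eqref{t1}--\eqref{t2} per triple, including the same factor of $2$ on $w_{(b,a)}$ coming from $M_v$ being dropped in both $\delta^{S_v}$ and $\delta^{E_v}$. Your direct "sum of changes is $\le 0$, so some change is $\le 0$" phrasing is just the contrapositive of the paper's proof by contradiction.
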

\begin{proof}
By contradiction, we assume that every solution in $\Delta$ has value worse than $\delta^*$.

By Lemma~\ref{th:newsol}, for any vertex $v$ we can obtain two feasible solutions by removing all the arcs from $M_v$ and reverting, alternatively, either all the arcs from $S_v$, or all the arcs from $E_v$. Since we are assuming that every solution in $\Delta$ has value worse than $\delta^*$, the following two inequalities express the latter for any $v\in V$.
\begin{eqnarray}
\sum_{(b,a)\in M_v}w_{(b,a)}+\sum_{(i,j)\in S_v}w_{(i,j)} &<& \sum_{(i,j)\in S_v}w_{(j,i)}, \label{sv}\\
\sum_{(b,a)\in M_v}w_{(b,a)}+\sum_{(i,j)\in E_v}w_{(i,j)} &<& \sum_{(i,j)\in E_v}w_{(j,i)}. \label{ev}
\end{eqnarray}
By summing~(\ref{sv}) and~(\ref{ev}) for all $v$ we obtain the following valid inequality:
\begin{equation}\label{LR1}
\underbrace{\sum_{v\in V} \left(2\cdot \sum_{(b,a)\in M_v}w_{(b,a)}+\sum_{(i,j)\in S_v\cup E_v}w_{(i,j)}\right)}_{LHS(1)} < \underbrace{\sum_{v\in V} \left(\sum_{(i,j)\in S_v \cup E_v}w_{(j,i)}\right).}_{RHS(1)}
\end{equation}

\paragraph{A Triangle Inequality Condition.}
For any basic triple $(a,c,b)\in T$ we consider the following two valid triangle inequalities.
\begin{eqnarray}
w_{(c,a)}&\leq& w_{(c,b)}+w_{(b,a)},\label{t1}\\
w_{(b,c)}&\leq& w_{(b,a)}+w_{(a,c)}.\label{t2}
\end{eqnarray}

By summing~(\ref{t1}) and~(\ref{t2}) for all $(a,c,b)\in T$ we obtain the following valid inequality:

\begin{equation}\label{LR2}
\underbrace{\sum_{(a,c,b)\in T} \left(w_{(b,c)}+w_{(c,a)}\right)}_{LHS(2)}
\leq
\underbrace{\sum_{(a,c,b)\in T} \left(2\cdot w_{(b,a)}+w_{(a,c)}+w_{(c,b)}\right).}_{RHS(2)}
\end{equation}

\paragraph{The Contradiction.}
Note that for every $(a,c,b)\in T$ we have $(a,c)\in E_b$ and $(c,b)\in S_a$. Therefore:

\begin{eqnarray}
LHS(2)&=&\sum_{(a,c,b)\in T} \left(w_{(b,c)}+w_{(c,a)}\right)\nonumber \\
&=&\sum_{v\in V} \left( \sum_{(i,j):(v,i,j)\in T} w_{(j,i)} + \sum_{(i,j):(i,j,v)\in T}  w_{(j,i)}  \right)\nonumber \\
&\stackrel{(\ref{S_v}),(\ref{E_v})}{=}& \sum_{v\in V} \left(\sum_{(i,j)\in S_v \cup E_v}w_{(j,i)}\right)=RHS(1). \label{R1=L2}
\end{eqnarray}
Therefore, by~(\ref{LR1}), (\ref{LR2}) and (\ref{R1=L2}) we have
$LHS(1)< RHS(1)=LHS(2)\leq RHS(2)$.
We get a contradiction by showing that $RHS(2)=LHS(1)$:
\begin{eqnarray}
RHS(2)&=& \sum_{(a,c,b)\in T} \left(2\cdot w_{(b,a)}+w_{(a,c)}+w_{(c,b)}\right) \nonumber \\
&=&\sum_{v\in V} \left( 2\cdot \sum_{(a,b):(a,v,b)\in T} w_{(b,a)}+ \sum_{(i,j):(v,i,j)\in T} w_{(i,j)} + \sum_{(i,j):(i,j,v)\in T}  w_{(i,j)}  \right)\nonumber \\
&\stackrel{(\ref{S_v}),(\ref{E_v}),(\ref{M_v})}{=}& \sum_{v\in V} \left(2\cdot \sum_{(b,a)\in M_v}w_{(b,a)}+\sum_{(i,j)\in S_v\cup E_v}w_{(i,j)}\right)=LHS(1). \label{R2=L1}
\end{eqnarray}
\qd
\end{proof}

\section{Future directions}\label{Sec:open}
The constrained \textsc{MinFAS} problem admits a natural covering formulation with an exponential number of constraints (see
e.g.~\cite{amms-pcscdtpo-08}):

\begin{subequations}
\label{ILP:HPVC}
\begin{align}
  \text{min}&
  \sum_{(i,j)}\delta_{(i,j)}w_{(i,j)}\\
  \text{s.t.}& \sum_{i=1}^c\delta_{(x_i,y_i)}\geq 1, &
  \text{for all}\ c\geq 2,\ (x_i,y_i)_{i=1}^c\ \text{s.t.}\  (x_i,y_{i+1})\in P,\label{con:altern_cycle}\\
  & \delta_{(i,j)}\in \{0,1\}, & (i, j)\in \inc(\mathbf{P}).
\end{align}
\end{subequations}
The condition $(x_i,y_{i+1})\in P$ in
constraint~(\ref{con:altern_cycle}) is to be read cyclically,
namely, $(x_c,y_1)\in P$. The
hyperedges in this vertex cover problem are exactly the alternating cycles of poset
$P$ (see e.g.~\cite{t-cpos-92}).

In this paper we prove that when the weights satisfy the triangle inequality then we can drop from~\eqref{ILP:HPVC} all the constraints of size strictly larger than three.
Generalizing, it would be nice to prove/disprove the following statement that we conjecture to be true.

\begin{hypothesis} When the weights satisfy the $k$-gonal inequalities, i.e., if for all $a_1,\ldots,a_{k}\in V$ we have
$w_{(a_1,a_k)}\leq w_{(a_1,a_2)}+\ldots +w_{(a_{k-1},a_k)}$, then there exists a constant $c(k)$, whose value depends on $k$, such that a proper formulation for the constrained \textsc{MinFAS} problem can be obtained by dropping from~\eqref{ILP:HPVC} all the constraints of size strictly larger than $c(k)$.
\end{hypothesis}
\textsc{MinFAS} problems with weights belonging to interval $[1,k-1]$ are examples of problems with $k$-gonal inequalities on the weights.
If true, the above structural result has the important implication that, for any constant $k$, constrained \textsc{MinFAS} with $k$-gonal inequalities on the weights admits a constant approximation algorithm (in contrast to the general case with arbitrary $k$ that does not seem to have any constant approximation assuming the Unique Games Conjecture~\cite{GuruswamiMR08}).


\paragraph{Acknowledgments.}
I'm indebted with Nikos Mutsanas for many useful discussions and comments.
This research is supported by
Swiss National Science Foundation project N. 200020-122110/1 ``Approximation Algorithms for Machine Scheduling Through Theory and Experiments III'' and by Hasler Foundation Grant 11099.
{\small
\bibliographystyle{abbrv}
\bibliography{ref,MonaldoPublications,sdp-ordering}
}

\clearpage

\begin{appendix}
\noindent{\Large \bf Appendix}

\section{Ranking with probability inequalities: a counterexample}\label{sec:probability}
The following example shows that probabilities inequalities are not sufficient for~\eqref{ILP:relaxFAS(P)} to be a proper formulation:

$$w_{(i,j)}+w_{(j,i)}=1 \mbox{ for all distinct } i,j$$

Consider the instance with 8 nodes with weight zero on the arcs displayed in Fig.~\ref{fig:prob} (therefore the reversed arcs have weight $1$). Moreover, all the arcs in $\{2,3\}\times \{ 7,8 \}$ have weight 1 (the reversed zero). Finally, all the remaining arcs have weight $0.5$, namely those in $\{1\}\times \{ 4,5,6 \}$ and the reversed ones.
\begin{figure}[hbtp]
    \centering
    \includegraphics[width=4cm]{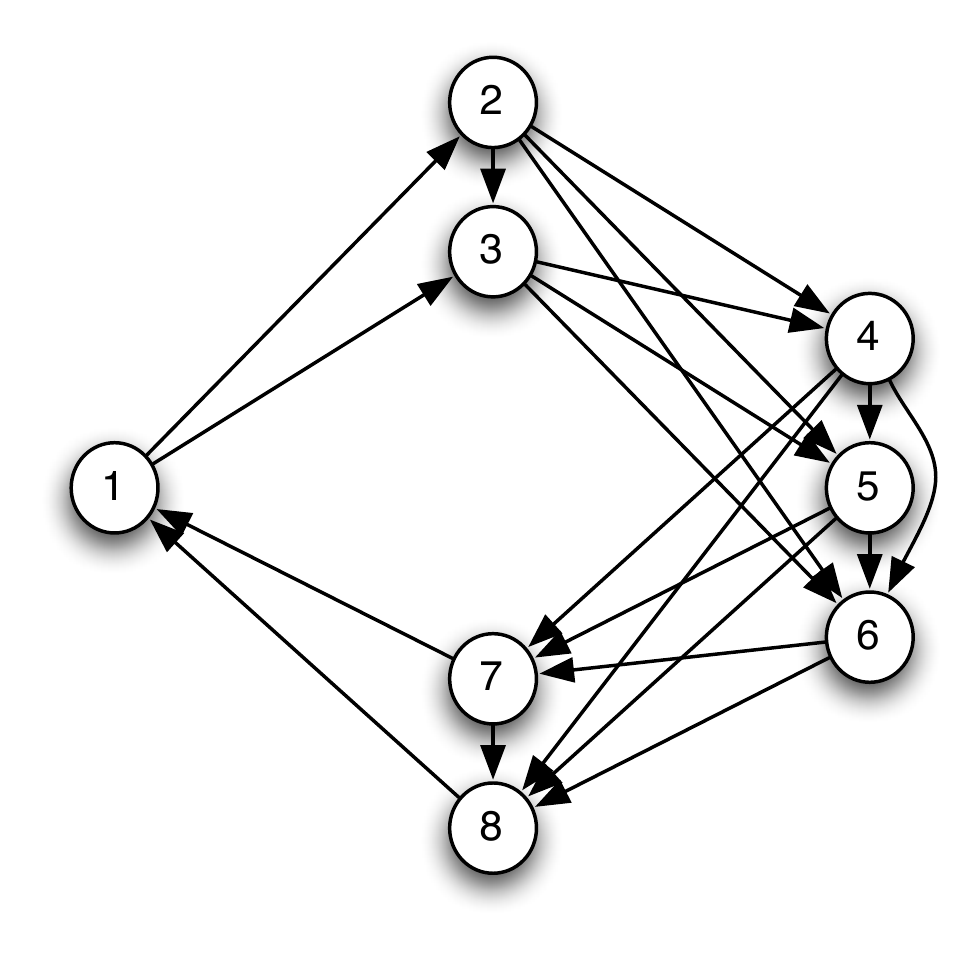}
    \caption{Counterexample for probability inequalities. }
  \label{fig:prob}
\end{figure}
A feasible solution for~\eqref{ILP:relaxFAS} is obtained by picking all the displayed arcs in Fig.~\ref{fig:prob} and none of the reversed ones (therefore we have to pick also those in $\{2,3\}\times \{ 7,8 \}$, $\{ 7,8 \} \times \{2,3\}$,  $\{ 4,5,6 \}\times \{1\}$ and  $\{1\}\times \{ 4,5,6 \}$ in order to satisfy the constraints in~\eqref{ILP:relaxFAS}). This solution has value $7$, whereas any total ordering has value not smaller than $7.5$ (the best total ordering is $(2,3,4,5,6,7,8,1)$).


\section{A comment on formulation~\eqref{ILP:relaxFAS(P)}}

If the poset is not empty the additional constraints that are present in formulation~\eqref{ILP:relaxFAS(P)} but not in~\eqref{ILP:relaxFAS} are also necessary. Indeed, in Figure~\ref{fig:four_prec} any permutation that complies with the precedence constraints has value larger than the solution suggested in the picture with a cycle.

 \begin{figure}[hbtp]
    \centering
    \includegraphics[width=10cm]{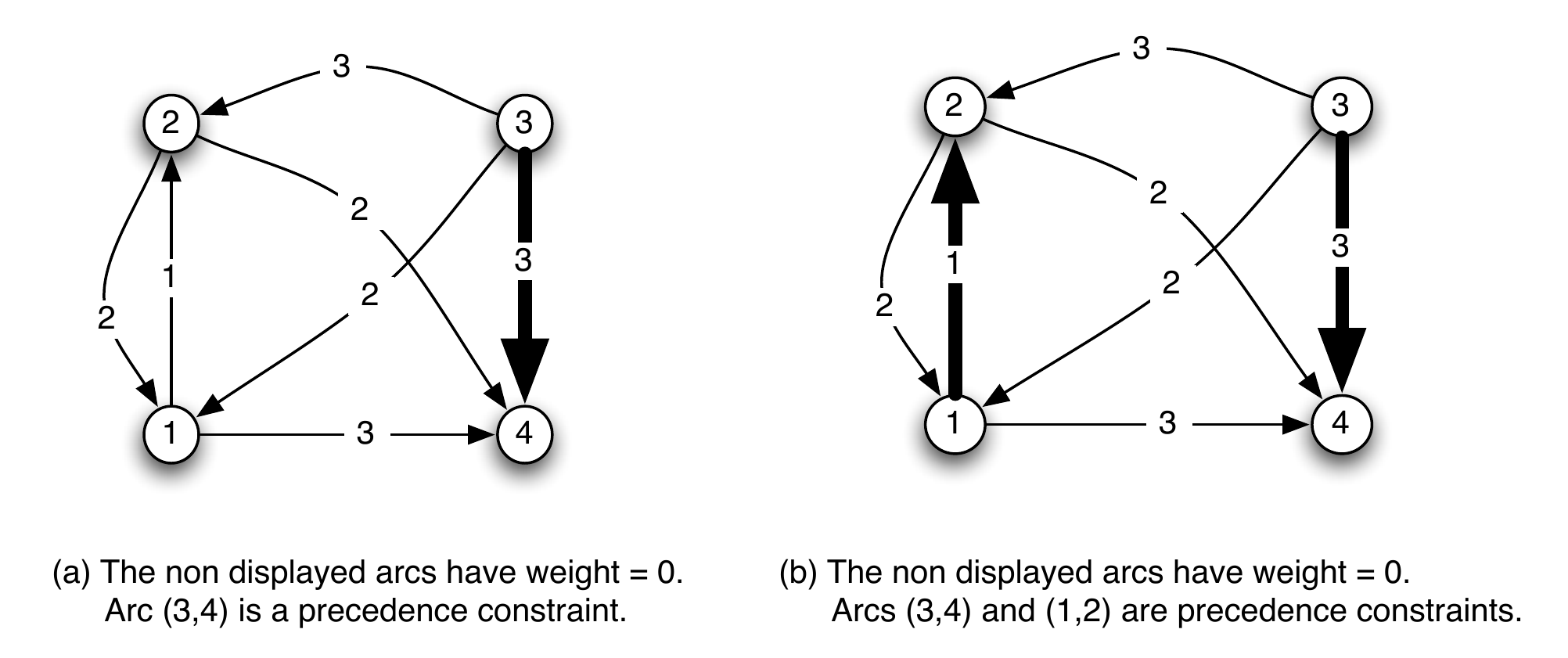}
    \caption{Solution $\delta^*_{(1,2)}= \delta^*_{(2,3)}=\delta^*_{(3,4)}= \delta^*_{(4,1)}=\delta^*_{(1,3)}= \delta^*_{(3,1)}=\delta^*_{(2,4)}= \delta^*_{(4,2)}=1$ has value smaller than any valid permutation.}
  \label{fig:four_prec}
\end{figure}

\pagebreak
\section{Figures used in the proof of Lemma~\ref{th:newsol}}

\begin{figure}[hbtp]
    \centering
    \includegraphics[width=9cm]{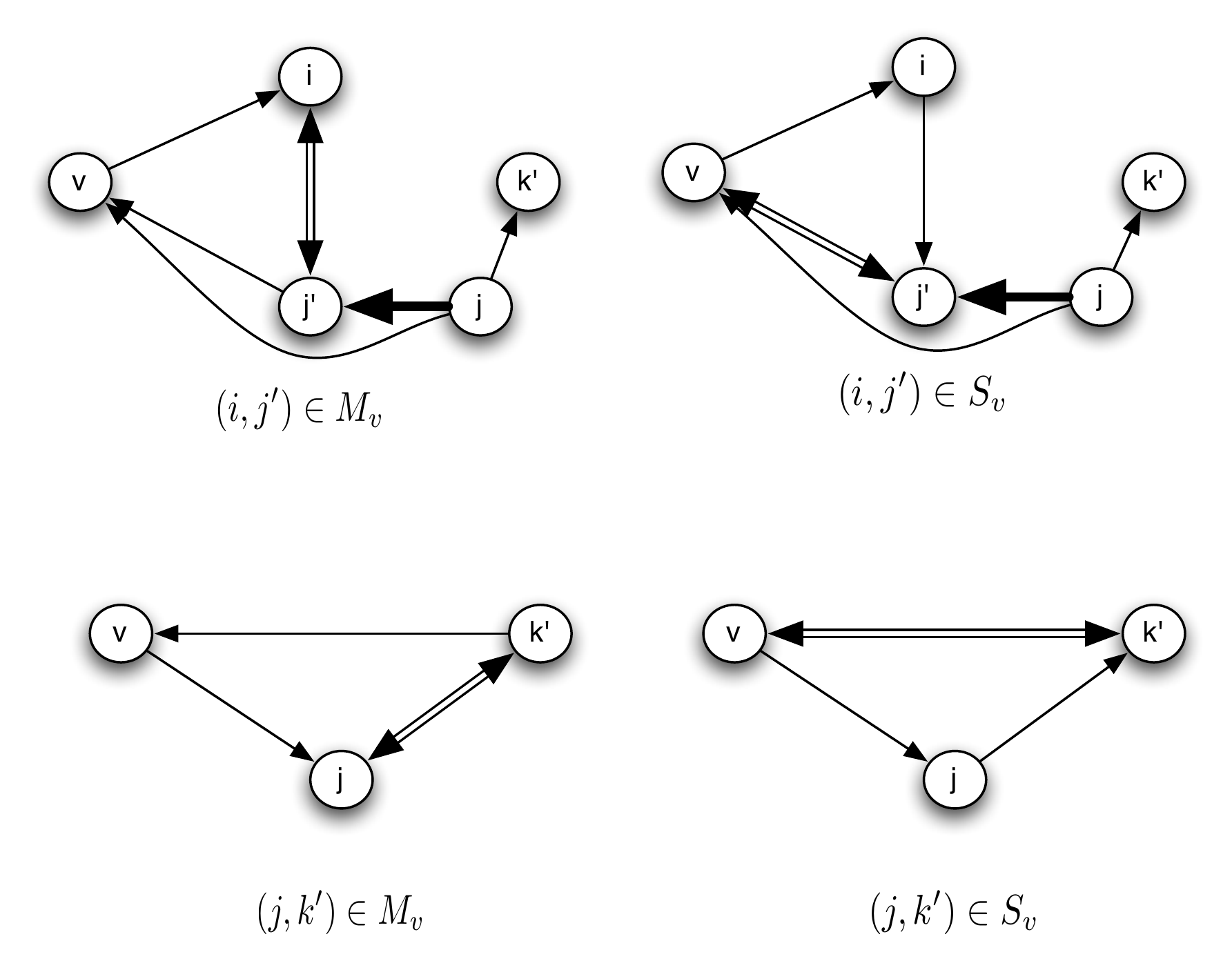}
    \caption{Case (a).}
  \label{fig:caseA}
\end{figure}

\begin{figure}[hbtp]
    \centering
    \includegraphics[width=9cm]{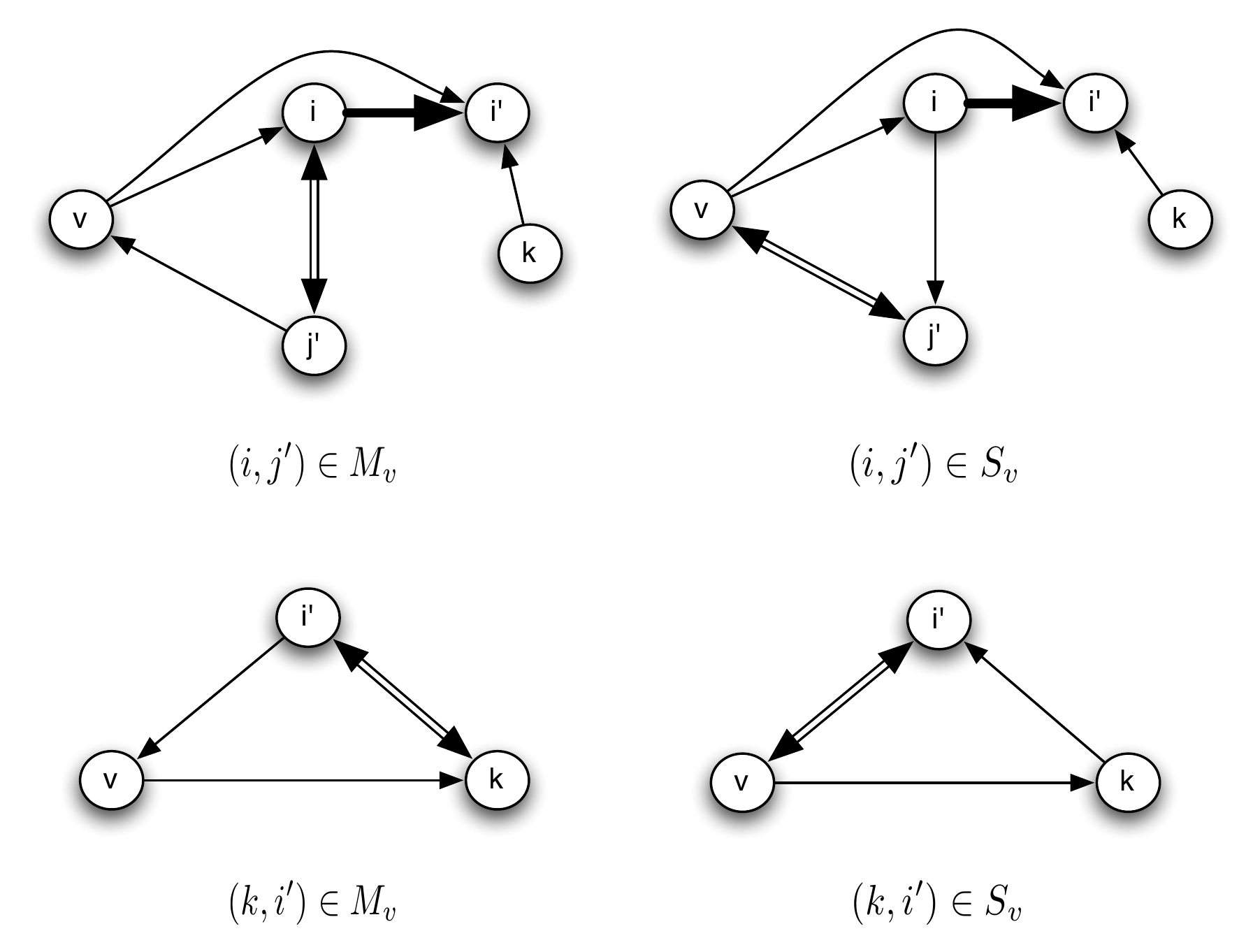}
    \caption{Case (b).}
  \label{fig:caseB}
\end{figure}
\begin{figure}[hbtp]
    \centering
    \includegraphics[width=9cm]{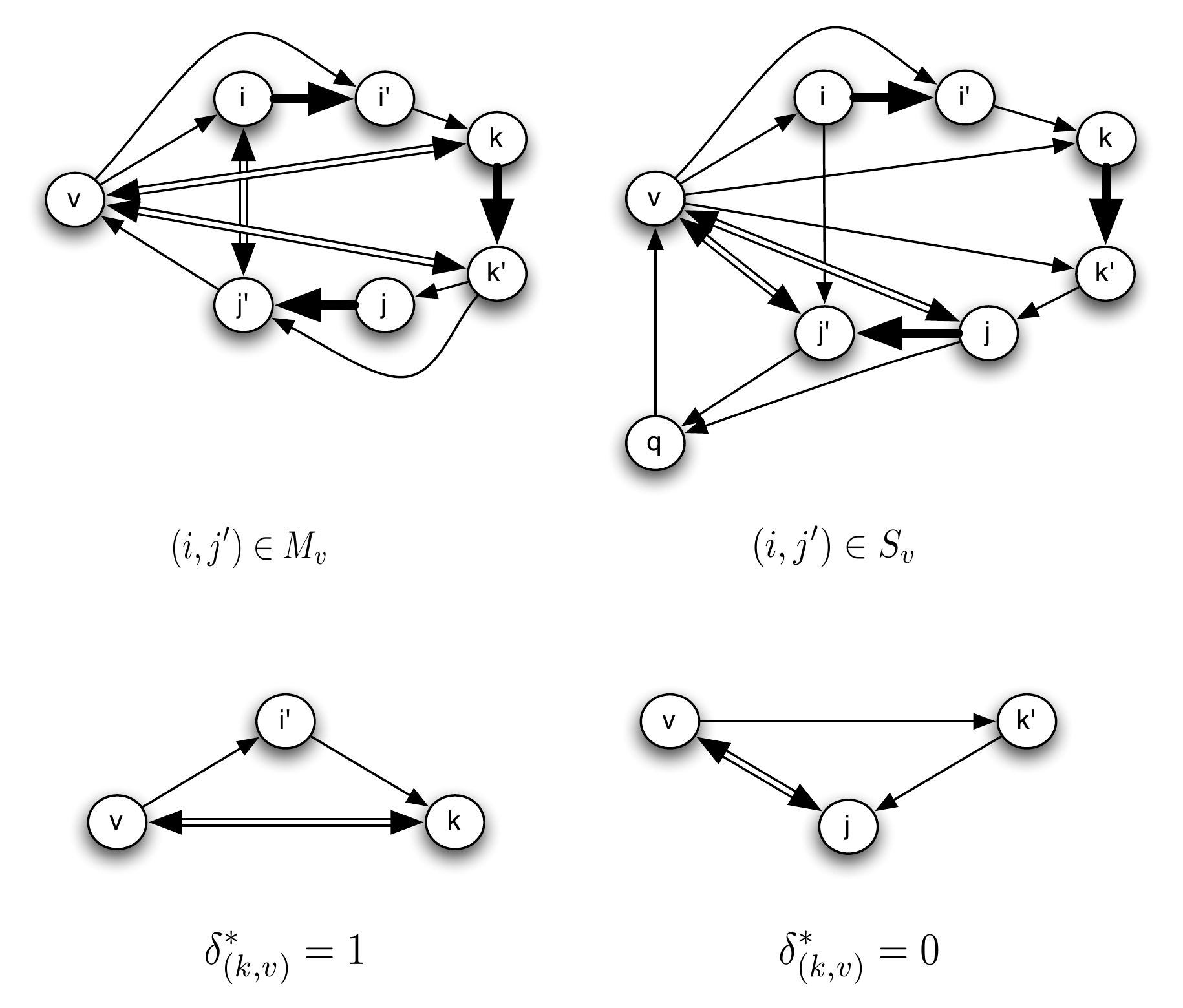}
    \caption{Case (c).}
  \label{fig:caseC}
\end{figure}

\end{appendix}

\end{document}